\documentclass[11pt]{article}
\usepackage[T1]{fontenc}
\usepackage{lmodern}
\usepackage[margin=1in]{geometry}
\usepackage{microtype}
\usepackage{amsmath,amssymb,amsthm,mathtools}
\usepackage{mathrsfs}
\usepackage[colorlinks=true,linkcolor=blue,citecolor=blue,urlcolor=blue]{hyperref}
\hypersetup{pdftitle={Polynomial Separation of a Shared Tensor Leg and Matrix Multiplication in Arbitrary Characteristic}}
\newtheorem{theorem}{Theorem}[section]
\newtheorem{proposition}[theorem]{Proposition}

\newtheorem{corollary}[theorem]{Corollary}
\theoremstyle{definition}

\theoremstyle{remark}

\newcommand{\eps}{\varepsilon}
\newcommand{\degen}{\mathrel{\rightsquigarrow}}

\DeclareMathOperator{\rank}{R}

\numberwithin{equation}{section}
\title{Polynomial Separation of a Shared Tensor Leg\\
and Matrix Multiplication in Arbitrary Characteristic}
\author{Houxi L\"u}
\date{October 9, 2026}
\begin{document}
\maketitle
\begin{abstract}
We improve the shared-leg separation construction in Proposition 3.1
of OpenAI, \emph{An Upper Bound of $9/4$ for the Matrix Multiplication
Exponent} (October 2, 2026). If $M$ tensor blocks have a common first
space and independent matched sectors on their other two spaces, then
$M+n-1$ copies of their matched sum degenerate to the full direct sum
of the blocks, each tensored with an $n$-dimensional dot product.
For $n=M$, this reduces the number of source copies from $5M$ to
$2M-1$. The construction uses polynomial interpolation and truncated
local substitution, with no roots of unity or factorial denominators.
It works over every field containing enough distinct elements.
We verify that the remaining spectral and discrete-growth arguments
of the cited preprint are independent of characteristic. Over an
algebraic closure, the replacement therefore gives $\omega\leq9/4$
in every characteristic.
\end{abstract}

\section{The replacement and its setting}

The separation problem is to make each block read its own copy of
a shared input. In Proposition 3.1 of
OpenAI~\cite{OAI:Matrix-Multiplication-Nine-Fourths-October-2-2026},
this is done by Fourier projection followed by a weight degeneration.
We use evaluation of polynomials instead. Evaluation at $M$ distinct
points gives $M$ independent input coordinates; a local substitution
allows us to approach this evaluation map through restrictions of
ordinary polynomial multiplication.

We follow the notation of the cited paper: $A=\sum_h A_h$ denotes
a matched sum, $A\geq B$ means that $B$ is a restriction of $A$,
$\otimes$ denotes tensor product, and $A^{\oplus r}$ denotes a full
direct sum of $r$ copies. An integer $r$ also denotes the direct
sum of $r$ scalar tensors, so $A\otimes r\cong A^{\oplus r}$.
All spaces are finite-dimensional over a algebraically closed field $F$. Tensors are
viewed as trilinear forms, and a restriction pulls a form back
along a linear map on each of its three spaces.

We use $\degen$ for an algebraic degeneration: the three pullback
maps may have entries in $F[\eps,\eps^{-1}]$, provided that the
resulting form is polynomial in $\eps$; its value at $\eps=0$ is
the target tensor. This extends the fixed-substitution, monomial-weight
convention of Section 2.2 of the cited paper. Its interpolation
argument still applies. Indeed, if the resulting polynomial family
has degree $L$ and constant term $B$, interpolation of its $j$th
tensor power at $jL+1$ distinct nonzero points gives
\[
 (jL+1)\lambda(A)^j\geq\lambda(B)^j.
\]
Over an infinite field, taking $j$th roots proves degeneration
monotonicity for every tensor character $\lambda$.

\section{Separation by polynomial multiplication}

\begin{proposition}[Shared-leg separation]\label{prop:separation}
Let $M,n\geq1$, and suppose that $F$ contains at least $M+n-1$
distinct elements. Let
\[
 A_h:X\times Y_h\times Z_h\longrightarrow F
 \qquad(1\leq h\leq M)
\]
be trilinear forms, and define their matched sum by
\[
 A(x,y,z)=\sum_{h=1}^M A_h(x,y_h,z_h),\qquad
 y\in\bigoplus_hY_h,\quad z\in\bigoplus_hZ_h.
\]
Then
\begin{equation}\label{eq:separation}
 \boxed{
 A^{\oplus(M+n-1)}\degen
 \bigoplus_{h=1}^M\bigl(A_h\otimes B_X(n)\bigr).
 }
\end{equation}
Here $B_X(n)$ is the dot-product tensor
\[
 B_X(n):F\times J\times J^\vee\longrightarrow F,
 \qquad(c,j,\varphi)\longmapsto c\varphi(j),
\]
for any $n$-dimensional space $J$. The direct sum on the right of
\eqref{eq:separation} has independent spaces on all three legs.
\end{proposition}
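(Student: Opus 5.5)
The plan is to index the $M+n-1$ copies of $A$ by distinct points $t_1,\dots,t_{M+n-1}\in F$, and to view each copy as evaluation of a polynomial identity at one point. Write $P_{<d}$ for the space of polynomials of degree $<d$ over $F$. Because evaluation at $M+n-1$ distinct points is injective on $P_{<M+n-1}$, every linear functional $L$ on $P_{<M+n-1}$ is a quadrature rule $L(f)=\sum_k w_k f(t_k)$ with weights $w_k\in F$. These weights come from Lagrange interpolation, so they involve only differences of the $t_k$: no roots of unity and no factorials. The target has independent inputs $x^{(h)}\in X$ ($1\le h\le M$), $y_{h,j}\in Y_h$ and $z_{h,j}\in Z_h$ ($1\le j\le n$, using a basis of $J$ and its dual basis). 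Its form is
\[
 \sum_{h}\sum_{j} A_h\bigl(x^{(h)},y_{h,j},z_{h,j}\bigr).
\]

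\textbf{Step 1 (the $X$-leg).} Choose $M$ distinct nodes $s_1,\dots,s_M$ among the $t_k$, and let $p_g(t)=\prod_{g'\neq g}(t-s_{g'})/(s_g-s_{g'})$ be the Lagrange basis, of degree $M-1$. On copy $k$, pull back the $X$-input to $x_k=\sum_g p_g(t_k)\,x^{(g)}$.

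\textbf{Step 2 (the $Y$- and $Z$-legs).} Handle each block $h$ separately; this is allowed because the sectors $Y_h,Z_h$ are independent across $h$. On copy $k$ set
\begin{align*}
 y_{k,h}&=\sum_{j=0}^{n-1}\eps^{-j}(t_k-s_h)^j\,y_{h,j},\\
 z_{k,h}&=w^{(h)}_k\sum_{j'=0}^{n-1}\eps^{j'}(t_k-s_h)^{n-1-j'}\,z_{h,j'}.
\end{align*}
Here $w^{(h)}$ are the quadrature weights of the functional $L_h(f)=[(t-s_h)^{n-1}]f$, the coefficient of $(t-s_h)^{n-1}$ in the Taylor expansion of $f$ at $s_h$. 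The weight $w^{(h)}_k$ may depend on $h$, because $z_{k,h}$ lives in the block-$h$ sector.

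\textbf{Step 3 (evaluating the sum).} Summing over the copies, the coefficient of $A_h(x^{(g)},y_{h,j},z_{h,j'})$ is $\eps^{j'-j}L_h(f)$ with
\[
 f(t)=p_g(t)(t-s_h)^{j+n-1-j'} .
\]
This $f$ has degree at most $M+n-2$, so the quadrature rule is exact on it. If $j>j'$, the power of $(t-s_h)$ exceeds $n-1$ and $L_h(f)=0$. If $j\le j'$, we get $L_h(f)=c_m$, the $m$th Taylor coefficient of $p_g$ at $s_h$, where $m=j'-j$. Hence the total coefficient is $c_m\eps^m$. This is polynomial in $\eps$, and at $\eps=0$ only $m=0$ survives. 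That term is $p_g(s_h)\delta_{jj'}=\delta_{gh}\delta_{jj'}$, which is exactly the target form, giving \eqref{eq:separation}.

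The main obstacle, and the step I would check most carefully, is the exactness bookkeeping in Step 3. The combined degree of $p_g$ and the two local monomials must stay at most $M+n-2$, so that $M+n-1$ points suffice. Simultaneously, the $\eps$-scaling must kill the off-diagonal terms with $j<j'$ rather than blow up, and this is why the powers $\eps^{-j}$ and $\eps^{j'}$ are paired as above. I would also confirm the field requirement: it needs only $M+n-1$ distinct elements, with the $s_h$ chosen among the $t_k$. The construction might even collapse to an exact restriction after reindexing, but proving it as a degeneration suffices for the statement.
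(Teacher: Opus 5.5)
Your $X$- and $Y$-maps match the paper's construction in coordinates. Your $s_h,t_k$ play the roles of its $a_h,b_k$, and $y_{k,h}$ is $g_h^{-1}\colon u^j\mapsto\eps^{-j}(t-a_h)^j$ evaluated at a node. Your $Z$-map, however, breaks the argument, exactly at the step you flagged.

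\textbf{The gap.} In Step 3, $f=p_g(t)(t-s_h)^{j+n-1-j'}$ has degree $M+n-2+(j-j')$. So the bound $\deg f\le M+n-2$ holds only for $j\le j'$. For $j>j'$ the quadrature rule for $L_h$ is not exact on $f$, and $L_h(f)=0$ says nothing about $\sum_k w^{(h)}_kf(t_k)$. Yet these are precisely the terms carrying the negative power $\eps^{j'-j}$, so they would have to vanish identically. They do not.

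Take $M=1$, $n=2$, $t_1=s_1$, $t_2=s_1+d$ with $d\neq0$. Then $p_1=1$ and $w^{(1)}=(-1/d,\,1/d)$. Copy $1$ receives $\bigl(x,\;y_{1,0},\;-\tfrac{\eps}{d}z_{1,1}\bigr)$, copy $2$ receives $\bigl(x,\;y_{1,0}+\tfrac{d}{\eps}y_{1,1},\;z_{1,0}+\tfrac{\eps}{d}z_{1,1}\bigr)$, and the sum is
\[
 A(x,y_{1,0},z_{1,0})+A(x,y_{1,1},z_{1,1})+\frac{d}{\eps}\,A(x,y_{1,1},z_{1,0}),
\]
which has a pole at $\eps=0$.

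This cannot be repaired by choosing the nodes differently. For $M=1$ the required vanishing is $h_m(t_1-s_1,\dots,t_n-s_1)=0$ for $1\le m\le n-1$, where $h_m$ are complete homogeneous symmetric polynomials (these are the divided differences of $(t-s_1)^{n-1+m}$). This forces $\prod_k(x-t_k+s_1)=x^n-c$. So a single functional $L_h$ works only when the shifted nodes are a scalar multiple of the $n$th roots of unity, which is the structure this construction is meant to avoid. It never works with $s_1$ among the nodes.

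\textbf{The fix.} Stop folding $(t-s_h)^{n-1-j'}$ into the evaluated polynomial, and use a separate quadrature rule for each $j'$. Let $w^{(h,j')}$ be the weights of the $j'$th Taylor coefficient $T_{h,j'}(f)=[(t-s_h)^{j'}]f$ on $F[t]_{<M+n-1}$, and set $z_{k,h}=\sum_{j'}\eps^{j'}w^{(h,j')}_k z_{h,j'}$. The coefficient of $A_h(x^{(g)},y_{h,j},z_{h,j'})$ becomes $\eps^{j'-j}\,T_{h,j'}\bigl(p_g(t)(t-s_h)^j\bigr)$. That polynomial has degree $M-1+j\le M+n-2$ for every $j$, so exactness always holds. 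The value is $0$ for $j'<j$ and $c_{j'-j}\eps^{j'-j}$ for $j'\ge j$, and your $\eps=0$ computation then goes through unchanged.

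This is exactly the paper's proof written out in coordinates. Its $\gamma=\bigoplus_h(\operatorname{id}\otimes\iota_h^\vee)$ sends $z_h\otimes(u^{j'})^*$ to $z_h\otimes\eps^{j'}T_{h,j'}$. The rank-$\ell$ decomposition of the multiplication tensor $C_\mu$ turns $T_{h,j'}$ into its quadrature weights. The structural reason the paper's version is safe is that its $Z$-functional is applied to $pq\in W=F[t]_{<\ell}$, which always lies in the exactness range. Your version applies $L_h$ to $(t-s_h)^{n-1-j'}pq$, which leaves $W$.
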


\begin{proof}
We want the $h$th block to read only the $h$th copy of $X$.
Polynomial interpolation provides exactly such coordinates.
Choose distinct $a_1,\ldots,a_M\in F$ and put
\[
 P=F[t]_{<M},\qquad J=F[u]/(u^n),
\]
where $F[t]_{<d}$ denotes the polynomials of degree less than $d$.
The map
\begin{equation}\label{eq:evaluation}
 e:P\longrightarrow F^M,\qquad
 p\longmapsto\bigl(p(a_1),\ldots,p(a_M)\bigr)
\end{equation}
is an isomorphism: a polynomial of degree less than $M$ cannot
vanish at all $M$ points unless it is zero, and the dimensions agree.
Thus $X\otimes P$ can supply the desired independent copies of $X$.

For each $h$, substitute $t=a_h+\eps u$ and define
\[
 \sigma_h(p,j)=p(a_h+\eps u)j\quad\text{in }J[\eps].
\]
The substitution is defined on all of $F[t]$; in particular,
$\sigma_h$ is bilinear on $P\times J$. It has two useful properties.
At $\eps=0$ it becomes $\sigma_h(p,j)=p(a_h)j$, which reads
exactly the evaluation coordinate in \eqref{eq:evaluation}.
For $\eps\ne0$, substitution identifies $F[t]_{<n}$ with $J$:
\[
 (t-a_h)^s\longmapsto\eps^s u^s\qquad(0\leq s<n).
\]
Both sides are bases, and all the displayed factors are invertible.

We now assemble these actions into one polynomial family on
\[
 \widetilde X=X\otimes P,\qquad
 \widetilde Y=\bigoplus_h(Y_h\otimes J),\qquad
 \widetilde Z=\bigoplus_h(Z_h\otimes J^\vee).
\]
Define $\Phi(\eps)$ on matching components by
\begin{equation}\label{eq:phi}
 \Phi(\eps)(x\otimes p,y_h\otimes j,z_h\otimes\varphi)
 =A_h(x,y_h,z_h)\varphi\bigl(\sigma_h(p,j)\bigr),
\end{equation}
and set it to zero on components with different $Y$- and $Z$-labels.
Extend by trilinearity. Each coefficient is polynomial in $\eps$.
We will obtain this family by restricting $M+n-1$ copies of $A$.

Put $\ell=M+n-1$, $Q=F[t]_{<n}$ and $W=F[t]_{<\ell}$.
Let
\[
 \mu:P\otimes Q\longrightarrow W,\qquad \mu(p,q)=pq
\]
be ordinary multiplication, and let
\[
 C_\mu(p,q,\psi)=\psi(pq)
 \qquad(p\in P,\ q\in Q,\ \psi\in W^\vee)
\]
be its trilinear form. For each $h$, define
\[
 g_h:Q\longrightarrow J[\eps],\quad q\longmapsto q(a_h+\eps u),
 \qquad
 \iota_h:W\longrightarrow J[\eps],\quad f\longmapsto f(a_h+\eps u).
\]
Here and below maps with polynomial coefficients are extended
linearly over the parameter ring when composed. Substitution
preserves products, so
\begin{equation}\label{eq:commutation}
 \sigma_h\circ(\operatorname{id}_P\otimes g_h)
 =\iota_h\circ\mu.
\end{equation}
This is the relation that connects our family to ordinary
multiplication.

Work temporarily over $F[\eps,\eps^{-1}]$, so that $g_h$ is
invertible. Pull back $A\otimes C_\mu$ along the three maps
\begin{align*}
 \alpha&=\operatorname{id}_{X\otimes P},\\
 \beta&=\bigoplus_h(\operatorname{id}_{Y_h}\otimes g_h^{-1}),\\
 \gamma&=\bigoplus_h(\operatorname{id}_{Z_h}\otimes\iota_h^\vee).
\end{align*}
The maps on the last two legs take
$\widetilde Y$ to $\bigoplus_h(Y_h\otimes Q)$ and
$\widetilde Z$ to $\bigoplus_h(Z_h\otimes W^\vee)$, respectively.
On the $h$th component, the auxiliary pairing becomes
\[
 (\iota_h^\vee\varphi)\bigl(\mu(p,g_h^{-1}j)\bigr)
 =\varphi\bigl(\sigma_h(p,j)\bigr)
\]
by \eqref{eq:commutation}. Hence the pullback is precisely
$\Phi(\eps)$, including its zero terms on unmatched components.

It remains to pay for $C_\mu$. Choose $\ell$ distinct points
$b_1,\ldots,b_\ell\in F$. Evaluation at these points identifies
$W$ with $F^\ell$. Multiplication in $W$ can therefore be computed
by evaluating the two inputs, multiplying their values at the
$\ell$ points, and interpolating the output. Thus
\[
 \ell\geq C_\mu,\qquad
 A^{\oplus\ell}\cong A\otimes\ell\geq A\otimes C_\mu.
\]
Composing this fixed restriction with $\alpha,\beta,\gamma$ gives
Laurent-polynomial pullback maps from $A^{\oplus\ell}$ whose
result is the polynomial family $\Phi(\eps)$.

Finally, at $\eps=0$, equation \eqref{eq:phi} reads
\[
 \Phi(0)(x\otimes p,y_h\otimes j,z_h\otimes\varphi)
 =p(a_h)A_h(x,y_h,z_h)\varphi(j).
\]
Under $\operatorname{id}_X\otimes e:X\otimes P\cong\bigoplus_hX$,
the $h$th block reads only the $h$th copy of $X$. Its other two
spaces are already separate. Consequently
\[
 \Phi(0)\cong\bigoplus_h\bigl(A_h\otimes B_X(n)\bigr),
\]
which proves the degeneration.
\end{proof}

The construction uses no roots of unity and no division by
factorials. The inverse of $g_h$ is simply
$j(u)\mapsto j((t-a_h)/\eps)$, using the representative of degree
less than $n$ in $J$. In particular, its Laurent-polynomial form
is valid in every characteristic. The family $\Phi$ has degree
at most $\min(M-1,n-1)$ in $\eps$, since powers $u^n$ and higher
vanish in $J$.

\section{The spectral comparison}

Let $F$ now be infinite, and let $\lambda$ be a tensor character
in the sense of Section 2 of the cited paper: it is normalized,
restriction-monotone, additive on full direct sums, and
multiplicative on tensor products. Write
$\lambda(B_X(n))=n^{p_X}$, with $p_X\in[0,1]$, as in its
Section 2.1. Proposition~\ref{prop:separation} and interpolation
monotonicity give
\begin{equation}\label{eq:finite}
 (M+n-1)\lambda(A)\geq n^{p_X}\sum_h\lambda(A_h).
\end{equation}
Taking $n=M$ yields
\begin{equation}\label{eq:comparison}
 (2M-1)\lambda(A)\geq M^{p_X}\sum_h\lambda(A_h).
\end{equation}
This replaces the factor $5M$ in Proposition 3.1 of the cited paper.

The proof of its Corollary 3.2 now goes through with this new
factor. To see precisely what changes, let $T=\sum_iT_i$ be a
matched sum and let $q$ be a rational probability vector. In
$T^{\otimes N}$ retain only sector words containing $Nq_i$
copies of label $i$. There are
\[
 M_N=\frac{N!}{\prod_i(Nq_i)!}
\]
blocks, each of character value $\prod_i\lambda(T_i)^{Nq_i}$.
Equation \eqref{eq:comparison} gives
\[
 \lambda(T)\geq
 \left(\frac{M_N}{2M_N-1}\right)^{1/N}
 M_N^{p_X/N}\prod_i\lambda(T_i)^{q_i}.
\]
The first factor tends to $1$, while $\log M_N/N$ tends to the
entropy $H(q)=-\sum_iq_i\log q_i$. By continuity in $q$, we
recover exactly the original shared-leg inequality
\begin{equation}\label{eq:entropy}
 \lambda(T)\geq e^{p_XH(q)}\prod_i\lambda(T_i)^{q_i}.
\end{equation}
The factorials here count words as integers; they are not
inverted in $F$. The improved finite copy count therefore
preserves the entropy exponent. By itself it does not give a
strict improvement below $9/4$.

\section{The consequence in positive characteristic}

The new separation argument works over every infinite field.
To apply the remainder of the cited paper in positive
characteristic, we need to check its other field-dependent steps.
We record these checks rather than repeat its proofs.

First, the rank bound for polynomial multiplication follows
from interpolation at distinct points, which is valid over any
infinite field. Second, the determinant filtration in its
Lemma 4.1 uses
\[
 0\longrightarrow (uw-vs)V_{e-1}
 \longrightarrow V_e\otimes\operatorname{span}(s,w)
 \longrightarrow V_{e+1}\longrightarrow0.
\]
Here $V_e$ is the space of homogeneous degree-$e$ forms in $u,v$,
and the last map substitutes $(s,w)=(u,v)$. It is surjective on
monomials. Multiplication by $uw-vs$ is injective, and its image
has the dimension of the kernel. These facts hold in every
characteristic, including $2$. Multiplication by a first input
preserves this kernel, so the block degeneration in that lemma
also holds without a characteristic assumption.

Third, the three-sector construction in its Lemma 4.2 only
uses coordinate restrictions, basis permutations, and integer
weights of a degeneration parameter. It does not divide by $3$:
the uniform probability vector and its entropy are real numbers
used to compare character values. Finally, the detecting-character
proof in its Appendix A uses the ordered tensor semiring,
flattening ranks, finite-dimensional real separation, compactness,
and a fixed-point theorem. Every nonzero tensor still restricts
to a scalar tensor over an arbitrary field, and all semiring and
rank comparisons used there remain valid. The separation and
fixed-point arguments concern real-valued states, not the
characteristic of the tensor field.

These observations justify using Sections 4 and 5 and Appendix A
of the cited paper over an algebraically closed field $F$ of any
characteristic, with Proposition~\ref{prop:separation} in place
of its Proposition 3.1. For clarity, the resulting chain of
inequalities is as follows. Put
\[
 t=(p_X+p_Y+p_Z)/3>0,\qquad
 C(a,b)=\sum_{i=0}^{a-1}\sum_{j=0}^{b-1}x_i y_j z_{i+j},
\]
and use the original symmetrized profile
\[
 P(a,b)=\left(\prod_{\pi\in S_3}\lambda_\pi(C(a,b))\right)^{1/(6t)}.
\]
Its concavity and shifted tripling follow from the two
constructions just checked and \eqref{eq:entropy}. The real
argument of Lemma 5.1 then gives
\[
 a^{4/3}\leq P(a,a)\leq(2a-1)^{1/t}.
\]
Consequently $t\leq3/4$, and every character satisfies
$\lambda(T_d)\leq d^{9/4}$ for the matrix multiplication tensor
$T_d=\sum_{i,j,k=1}^d x_{ij}y_{jk}z_{ki}$.

Write $\nu=\inf_{d\geq2}\log_d\rank(T_d)$, again as in the
cited paper. Its detecting-character lemma, applied with
$k=\lceil d^\nu\rceil-1$, yields
\[
 d^\nu\leq d^{9/4}+1.
\]
Letting $d\to\infty$ gives $\nu\leq9/4$. The recursive
bilinear-algorithm argument in its proof of Theorem 1.1 therefore
gives $\omega_F\leq9/4$.

\begin{corollary}\label{cor:fields}
For every algebraically closed field $K$, the matrix multiplication exponent satisfies
$\omega_K\leq9/4$. In particular, the bound holds in all but
finitely many characteristics; in fact no characteristic needs
to be excluded.
\end{corollary}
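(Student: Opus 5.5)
The proof assembles the chain of inequalities recorded just before the statement, now over an arbitrary algebraically closed field $K$. Such a field is infinite, so it contains at least $M+n-1$ distinct elements for every $M,n$. Proposition~\ref{prop:separation} therefore applies in every instance, and interpolation monotonicity holds for every tensor character $\lambda$ over $K$.

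First I obtain \eqref{eq:comparison} by taking $n=M$ in \eqref{eq:finite}. I then pass to type classes in $T^{\otimes N}$ as in Section 3. This recovers the entropy inequality \eqref{eq:entropy}, since the multinomial counts are integers and are never inverted in $K$. Next I invoke the characteristic-free checks of Section 4:
\begin{itemize}
\item the rank bound for $C_\mu$ via interpolation at distinct points;
\item the determinant filtration, whose exactness holds in characteristic $2$ as well;
\item the three-sector construction, which uses only integer weights.
\end{itemize}
These give concavity and shifted tripling for the symmetrized profile $P(a,b)$. The real-variable argument of Lemma 5.1 of the cited paper then yields $a^{4/3}\le P(a,a)\le(2a-1)^{1/t}$. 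Letting $a\to\infty$ forces $t\le 3/4$, and hence $\lambda(T_d)\le d^{9/4}$ for every character over $K$.

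The step I expect to need the most care is the detecting-character lemma of Appendix A, since it converts character bounds back into rank bounds. I would verify that its inputs are either real-analytic or purely combinatorial statements about the ordered semiring of tensors over $K$: flattening ranks, the fact that every nonzero tensor restricts to a unit scalar tensor, separation in finite-dimensional real spaces, compactness, and the fixed-point theorem. None of these depends on the characteristic. I would apply the lemma with $k=\lceil d^\nu\rceil-1$. If $\lceil d^\nu\rceil-1$ copies of the unit tensor sufficed to bound $T_d$ in the semiring order, that would contradict the definition of $\nu$, so there is a character with $\lambda(T_d)>k$. Hence $d^\nu\le d^{9/4}+1$. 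Letting $d\to\infty$ gives $\nu\le 9/4$.

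Finally, the standard recursive bilinear algorithm (Strassen's argument) converts $\operatorname{R}(T_d)\le d^{\nu+\delta}$ into $\omega_K\le\nu+\delta$ for every $\delta>0$. This argument is valid over any field, so $\omega_K\le 9/4$. Since $K$ was an arbitrary algebraically closed field, no characteristic is excluded. The weaker claim about ``all but finitely many characteristics'' follows trivially from this.
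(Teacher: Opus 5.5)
Your proposal is correct and takes the same approach as the paper, essentially step for step: Proposition~\ref{prop:separation} over the infinite field $K$, the type-class passage to \eqref{eq:entropy}, the characteristic-free checks of the cited Lemmas 4.1, 4.2 and Appendix A, then $t\le 3/4$ (hence $\lambda(T_d)=d^{3t}\le d^{9/4}$), the detecting-character lemma with $k=\lceil d^\nu\rceil-1$, and the recursive bilinear algorithm. You also fill in two details the paper leaves implicit: letting $a\to\infty$, and deriving the lemma's hypothesis from the definition of $\nu$. For the latter, note that characters can only detect the asymptotic order, so the hypothesis must be read asymptotically; then you use $T_d^{\otimes N}\cong T_{d^N}$ to get $\nu\le\log_d k<\nu$.
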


\section*{Acknowledgements}

I would like to thank GPT 6.1 Sol (OpenAI) and Deepseek V4.1 Flash (Deepseek) for completing a gap in an earlier proof, checking the validity of my proof and generating the format of this paper.

\end{document}